\documentclass[a4,paper,11pt]{article}

\usepackage{cite}
\usepackage{amsmath,amssymb,amsfonts}
\usepackage{amsthm}
\usepackage{algorithmicx,xpatch}
\usepackage{algorithm}
\usepackage{algpseudocode}
\usepackage{graphicx}
\usepackage{textcomp}
\usepackage{xcolor}

\usepackage[utf8]{inputenc}
\usepackage[T1]{fontenc}
\usepackage{stmaryrd}
\usepackage{listings}
\usepackage{tikz}
\usepackage{setspace}
\usepackage{multirow}
\usepackage{url}
\usepackage{moresize}

\usepackage{enumitem}
\usepackage{pifont}
\usepackage{booktabs}
\usepackage{etoolbox}
\usepackage{verbatim}
\usepackage{float}
\usepackage{ifthen}
\usepackage{scalefnt}
\usepackage{blkarray}
\usepackage{subfig}
\usepackage[mathscr]{eucal}

\newcommand{\nzero}{\setminus\{0\}}

\newcommand{\N}{\mathbb{N}}

\newcommand{\zp}{\mathbb{Z}/p\mathbb{Z}}

\newcommand{\zx}{\mathbb{Z}[X]}
\newcommand{\znx}{\mathbb{Z}_{n-1}[X]}
\newcommand{\znnx}{\mathbb{Z}_{n}[X]}

\newcommand{\iphi}{\phi^{-1}}

\newcommand{\mnsB}{\mathcal{B}}

\newcommand{\isrep}{\underset{\mnsB}{\equiv}}

\DeclareMathOperator{\mulwlo}{mul\_w\_lo}
\DeclareMathOperator{\mulwhi}{mul\_w\_hi}
\DeclareMathOperator{\UP}{UP}

\theoremstyle{definition}
\newtheorem{defi}{Definition}
\theoremstyle{remark}
\newtheorem{remark}{Remark}
\newtheorem{theorem}{\textbf{Theorem}}
\usepackage{cite}
\usepackage{amsmath,amssymb,amsfonts}
\usepackage{graphicx}
\usepackage{textcomp}
\usepackage{xcolor}
\usepackage{hyperref}
\usepackage{array}
\usepackage{diagbox}
\usepackage{multirow}
\usepackage{booktabs}

\def\BibTeX{{\rm B\kern-.05em{\sc i\kern-.025em b}\kern-.08em
    T\kern-.1667em\lower.7ex\hbox{E}\kern-.125emX}}

\usepackage{authblk}

\author[1]{DIDIER Laurent-Stéphane}
\author[1]{DUTOIS-RUIZ Alexy}
\author[1]{ROBERT Jean-Marc}

\affil[1]{IMATH - Université de Toulon, France}

\title{Faster PMNS Multi-precision Multiplications Using Truncated Montgomery Technique
\thanks{This work has been funded by the MAERA ANR Project}
}

\begin{document}
\maketitle

\begin{abstract}

The Polynomial Modular Number Systems (PMNS) aim to represent elements of fields or rings of large characteristics using polynomials satisfying bounds on some parameters (degree, absolute values of the coefficients). Those PMNS, while some conditions are fulfilled, using convenient parameters and implementation features, allow some speed-ups in cryptographic computations. Recent works (Meloni \emph{et al.} \cite{MeloniPV25}) propose better speed-ups by improving the parameter generation of the system, and by using multi-precision polynomial coefficients, i.e. coefficients stored using several machine words. In this work, we first present PMNS schoolbook software implementation better in some context than the Toeplitz state-of-the-art counterpart of \cite{MeloniPV25}, taking advantage of smaller memory cost by being free to choose a smaller $n$ parameter, minimizing at the same time the complexity of the multiplication. This implementation for 4096 bit modulo size shows element size smaller by 11\% and 20 \% speed-up. We thus present a new improvement on the PMNS, applying to the \textsl{internal reduction} an approach similar to the truncated Montgomery reduction technique presented by Didier \emph{et al.} in \cite{DidierEGR24}. In the context of software implementations using \texttt{AVX512} instruction set extension, and modulo size up to 8192 bits, this new approach allows speed-ups up to 15\% in modular multiplication computation, in comparison with conventional approaches.% or Karatsuba polynomial multiplication. We also present in the same context, the first squaring, which is about 15 \% faster than its multiplication counterpart.
\end{abstract}

\textbf{keywords:} PMNS, Multi-Precision, Truncated Montgomery Modular multiplication, Internal Reduction, SIMD
%\end{keywords}

\section{Introduction}

The Polynomial Modular Number System (PMNS) is a number representation by polynomials that is an interesting alternative to classical multi-precision modular arithmetic \cite{MeloniPV25} \cite{9974256} because of its parallel properties. This makes it attractive for cryptographic purposes \cite{sike} .

However, this system becomes less useful for large numbers because the required polynomials have a degree that is too high, exceeding the number of words required to store a multi-precision integer \cite{9974256}. Some improvements have been suggested to address this issue, such as the use of multi-precision coefficients \cite{MeloniPV25}. 

In this article, we show how to reduce the cost due to the use of multi-precision coefficients by using truncation technique on each coefficients similar to \cite{DidierEGR24}.
Throughout this paper, we will more particularly compare our proposition to \cite{MeloniPV25} which, to the best of our knowledge,  is currently the state of the art.

\subsection*{Contributions}

Firstly, we present a \textit{truncated} Montgomery-like internal reduction algorithm inspired by \cite{DidierEGR24} in order to reduce the complexity of this algorithm.
%{\color{red} est-ce gain de ce papier ou de \cite{DidierEGR24} ?}
Thus, we give the theoretical amount of elementary multiplications of the complete and of the truncated algorithms.

We then provide software implementations of both configurations (full and truncated) and we confirm the reduction of elementary operation numbers as well as the actual number of multiplication-accumulation instructions in the vectorized implementation using \texttt{AVX512} and \texttt{IFMA52} instruction set.

We follow comparing benchmark results from our code with the software implementation from \cite{MeloniPV25}, which is the state-of-the-art, to the best of our knowledge, on Sapphire Rapids and ZEN 4 architectures. This experimentation shows that our full implementation is competitive with it, and that applying our truncated Montgomery-like internal reduction leads to a speedup up to 15\% for the multiplication in the PMNS compared to our full implementation.

% NOTE peut-être pas intéressant
%We then enumerate the number of costly instructions in the three multiplication algorithms.
%By doing so, we show how a classical schoolbook multiplication can exhibit better performances in the PMNS compared to a subquadratic efficient Toeplitz vector-matrix multiplication.

%{\color{red}Je penses, qui'l faille retoucher l'introduction. Il me semble que son objectif pourrait être plus clair }

\subsection*{Notations}

In this article, $w$ is the size of a machine word and $s$ the number of machine words required to write a multi-precision number.
$a[i]$ refers to the $i^{th}$ word of $a$ and $a \langle i \rangle$ to the $i^{th}$ bit of $a$.
$\bigvee$ is the logical OR and $>>$ the bitwise right shift.

%\begin{itemize}
%    \item First vectorized squaring in PMNS, with small and large coefficients
%    \item We present PMNS schoolbook software implementation better than the Toeplitz state-of-the-art counterpart of \cite{MeloniPV25}, taking advantage of smaller memory cost by being free to choose the $n$ parameter, minimizing at the same time the complexity of the multiplication 
%    \item Starting from the previous software implementations, we adapted the Truncated Montgomery approach of Didier \emph{et al.}\cite{DidierEGR24} to the PMNS with large coefficients, leading to significant speed-ups.
%\end{itemize}

%The speed-ups are ...

%Mionkowski bound \cite{BMPV2024}

\section{Background on PMNS}
Polynomial Modular Number Systems (PMNS) are non-positional, modular integer number systems that have been pioneered by Bajard et al.\cite{amns_04}. In such number systems integers modulo $p$ are represented as polynomials with integer coefficients. 

A PMNS is defined by a tuple $(p, n, \gamma, \rho, E)$, where $p$, $n$, $\gamma$ and $\rho$ are nonzero positive integers, and $E\in\mathbb{Z}[X]$ is a monic polynomial, such that $\deg(E)=n$ and $E(\gamma) \equiv 0 \bmod p$. Thus, an integer $a$ is represented in this system by a polynomial $A[X]$ such that $\deg(A) < n$ and with coefficients bounded by $\rho$ such that $A(\gamma) \equiv a \bmod p$. More formally a PMNS is defined as follows:

\begin{defi} \label{def:pmns}
Let $p \geqslant 3$ and $n, \rho \geqslant 2$ be three integers.\\
Let $E\in \znnx$ be a monic polynomial and $\gamma \in \zp\nzero$, such that $E(\gamma) \equiv 0 \bmod{p}$. A tuple $\mnsB = (p, n, \gamma, \rho, E)$  is a PMNS if
$\forall a \in \zp$, $\exists A \in \znnx$ such that:
\begin{enumerate}
\item $A(\gamma) \equiv a \mod p$
\item $ \Vert A \Vert _{\infty} < \rho $
\end{enumerate}
\end{defi}

Such a polynomial $A$ is called a \textbf{representation} of $a$ in $\mnsB$ and we denote $A \isrep a$.
We will denote a polynomial belonging to this PMNS by $A \in \mnsB$.

As a consequence, the arithmetic operations in PMNS are polynomial operations. However, they must be performed in order to ensure outputs in $\mnsB$. Indeed, if $A, B \in \mnsB$, then $\|A+B\|_{\infty}$ and $\|A\times B\|_{\infty}$ might be greater than $\rho$. Furthermore, $\deg(A\times B)$ is most-likely to be greater than $n$. In the first case,  an \textsl{internal reduction} has to be performed and an \textsl{external reduction} for the latter.

\subsection{External reduction} \label{subsubsec:ext-red}
Let $R\in \zx$. The \textit{external reduction} consists in computing a polynomial $C \in \znx$ such that $C(\gamma) \equiv R(\gamma) \bmod p$, using $E$ of $\mnsB$, as follows: $$C \equiv R \bmod E\,.$$
Since $E\in\znnx$ and is monic, $C \in \znx$. Moreover, $E(\gamma) \equiv 0 \bmod p$ ensures that $C(\gamma) \equiv R(\gamma) \bmod p$. 
The polynomial $E$ is called the {\em external reduction polynomial}.

Some criteria have been proposed  in order to find a \textsl{good} {\em external reduction polynomial} \cite{BMPV2024}. Among them,  irreducible sparse polynomials are the best candidates. In \cite{dosso2025friendly}, it is shown how to adapt the PMNS parameters generation in order to deal with non-monic polynomials. This result extends the set of possible {\em external reduction polynomial}. 

The external reduction can also be seen as a vector-matrix multiplication where the matrix is a Toeplitz matrix \cite{MeloniPV24, dosso2025friendly, MeloniPV25}. The advantage of this approach relies on the sub-quadratic complexity of this product \cite{hasan2012multiway}.

\subsection{Internal reduction} \label{subsubsec:int-red}

Let $C\in \znx$. The \textit{internal reduction} aims to compute a polynomial $S \in \mnsB$ such that $S(\gamma) \equiv C(\gamma) \bmod{p}$. In this paper, we focus on the Montgomery-like approach presented in~\cite{amns_08}, see Algorithm~\ref{alg:redcoeff}. It is one of the most efficient internal reduction methods \cite{th_dosso} and requires three additional parameters, two polynomials $M, M'$ and an integer $\phi \geqslant 2$ such that:
\begin{center}
   $M \isrep 0$ and $M' \equiv -M^{-1}\ \text{mod}(E, \phi)$
\end{center}
\begin{algorithm}
  \caption{Coefficients reduction (RedCoeff)~\cite{amns_08}}
  \label{alg:redcoeff}
  \begin{algorithmic}[1]
    \Require  $\mathcal{B} = (p, \, n, \, \gamma, \, \rho, E)$ a PMNS, $C \in \mathbb{Z}_{n-1}[X]$ such that $\Vert C \Vert_{\infty} < 2 \rho$, $\phi \in \mathbb{N}\setminus\{0\}$, $M \isrep 0$,
    and $M' \equiv -M^{-1}\ \text{mod}(E, \phi)$.
    \Ensure $S(\gamma) \equiv C(\gamma)\phi^{-1} \bmod p$, and $S \in \mnsB$
    \State $Q \leftarrow C \times M' \bmod (E, \, \phi)$
    \State $T \leftarrow Q \times M \bmod E$
	\State $S \leftarrow (C + T) / \phi$
    \State return $S$
  \end{algorithmic}
\end{algorithm}

%\begin{remark}
This method is efficient if $\phi$ is a power of two. Also, it induces a multiplicative factor $\iphi$ on the output. In~\cite{DDV19}, it is explained how to deal with this factor. A process to generate efficient PMNS is also presented, where $\phi$ can be any power of two.
%\end{remark}

Algorithm~\ref{alg:pmns_mult} presents the Montgomery-like modular multiplication where RedCoeff refers to Algorithm~\ref{alg:redcoeff}.
\begin{algorithm}
  \caption{Multiplication in PMNS}
  \label{alg:pmns_mult}
  \begin{algorithmic}[1]
    \Require $A \isrep a$, $ B \isrep b$, $\phi \in \mathbb{N}\setminus\{0\}$ and $\mathcal{B} = (p, \, n, \, \gamma, \, \rho, E)$
    \Ensure $S \in \mathcal{B}$ with $S \isrep ab\phi^{-1}$
    \State $C \leftarrow A \cdot B \mod E$
    \State $S \leftarrow \text{RedCoeff}(C)$
    \State return $S$
  \end{algorithmic}
\end{algorithm}

Similarly to the external reduction, it is possible to express the internal reduction as a matrix-vector product of a Toeplitz matrix \cite{MeloniPV24, dosso2025friendly, MeloniPV25}. It appears that some specific primes $p$ lead to PMNS having efficient internal reduction \cite{coladon2021mphell, bouvier2021alternative}. These moduli are generalized as PMNS-friendly primes in \cite{dosso2025friendly}.

\subsection{Complexities}% \color{red} A SUPPRIMMER ? \color{black}}

\subsubsection{Parameters generation} \label{subsubsec:par-gen}
\label{sec:pmnsparamgen}

Building a proper PMNS is a hard task because of the relation between its parameters $p$, $\rho$, $n$ and $\phi$. Some relations between them have been established and a parameter generation process have been proposed in \cite{MeloniPV24, dosso2025friendly}.

PMNS can be seen as $n$-dimensional euclidean lattices that are built from polynomials $P$ such that $P(\gamma) \equiv 0 \bmod p$ \cite{BMPV2024}. As a consequence, because of the Minkowski bound related to lattices \cite{minkowski1910geometrie} the following bound holds \cite{BMPV2024}:
$$\rho > \frac{1}{2}p^{1/n}(1+n-1)\kappa$$
where $\kappa$ is a value related to the largest coefficient of the \textsl{canonical} basis of the lattice. The point here, is that the larger we want the moduli $p$, the larger is the size $\rho$ of the PMNS coefficients or the larger grows $n$. Thus, in order to build PMNS for large moduli $p$ we need to find a trade-off between the degree $n$ of the PMNS polynomials and the use of multi-precision integer arithmetic for the polynomials coefficients.

\subsubsection{PMNS complexity}$ $\\ 
In this section, we discuss PMNS complexities for modular multiplication with Algorithm~\ref{alg:pmns_mult}.
We consider a $w$-bit processor architecture and basic arithmetic computations performed on $w$-bit words. Let $\mathscr M$ and $\mathscr A$ respectively denote the multiplication and the addition of two $w$-bit integers.  Let ${\mathscr L}$ be the notation for logical operations, for instance the AND function, or the right and left shifts on $w$-bit integers.  

The performances and the required memory storage of a PMNS mainly depend on the target architecture and the value of $n$. We assume the modular multiplication inputs belong to a PMNS $\mathcal{B} = (p, n, \gamma, \rho, E)$, such that: $\rho = 2^t$ and $\phi = 2^h$, where $t, h \in \N$ and $1 \leqslant t < h \leqslant w$. 
Since elements in $\mnsB$ are polynomials, $n$ $w$-bit data words are required to represent each of them. Thus, addition is a simple polynomial addition, whose cost is $n \mathscr A$.
%As a consequence, an element in $\mathcal{B}$ requires $n\times w$-bits to be represented.

In this paper, we consider sparse external reduction polynomials $E(X) = X^n - 2$ exclusively. %or $E(X) = X^n - X - 1$.

In \cite{MeloniPV25}, the authors underline the interest of having multi-precision polynomial coefficient, i.e. $\phi = 2^{sw}$ with $s\in \mathbb N$ and $s>1$. They show that with $\phi = 2^w$, ie single machine word coefficients, the value of the parameter $n$ is far greater than the value they call $n_{opt}$ which is the equivalent classic multi-precision representation word number. In comparison, with $s>1$ one has a better compromise for the number representation memory cost, that is $n_m s<n$, denoting $n_m$ the actual value of the PMNS external reduction polynomial degree in case $s>1$.  Thus, as noticed in section \ref{sec:pmnsparamgen}, the elementary multiplication cost is better. 
Table~\ref{tab:ops-cost-pmns} gives the modular multiplication theoretical cost in both cases $s=1$ and $s>1$.
It is worth noticing that our approach uses SIMD \texttt{AVX512} instruction set extension. Therefore the instruction count corresponds to the costs presented in table~\ref{tab:ops-cost-pmns} roughly divided by eight, in case eight multiplications are performed in one single vector multiplier.

\begin{table}[h]
\renewcommand{\arraystretch}{1.2}
\small
\begin{center}
\begin{tabular}{|l|l|}
\hline
$E(X)$ & $X^n - 2$  \\
\hline 
$s=1$ & $3n^2 \mathscr M + (5n^2 - 3n) \mathscr A+ (2n-1) \mathscr L$\\% \mathscr S^w_r + (n - 1)\mathscr S^1_l$ \\
\hline
$s>1$ & $2.5(n_m s)^2 \mathscr M + (5(n_m s)^2 - n_m - 2) \mathscr A + n_m s \mathscr L$ \\
\hline
\end{tabular}
\end{center}
\caption{\label{tab:ops-cost-pmns} PMNS Modular multiplication theoretical cost in elementary word multiplications $\mathscr M$, additions $\mathscr A$ and other logic operations $\mathscr L$ (mainly shifts and maskings), where $\phi = 2^{sw}$, where $s=1$ and $s>1$.}
\end{table}

Recent works \cite{MeloniPV24,MeloniPV25} introduced sub-quadratic approaches in order to lower the complexity of the polynomial multiplication or, in a vector-matrix setting, a Toeplitz vector-matrix product, which is equivalent of a Karatsuba multiplication. These approaches can be recursive. In this case, the number of elementary multiplications  is lower, at the cost of more elementary additions. % due to the final reconstruction, operand split combinations, etc.  
Thus a compromise is often necessary depending on the operand size, limiting the recursive calls to larger sizes, while the last calls for the smallest sizes has to remain with quadratic complexity using a schoolbook approach.

\section{Polynomial Multiprecision Coefficient}

The complexity of the PMNS relies upon the $n$ parameter, which is larger than the word number required to store a multi-precision integer. As mentioned above, we expect that having larger polynomial coefficients would lead to a better instruction count in comparison with single word coefficients.

In practice, the same algorithms are applied for the same operations. For instance, performing a modular multiplication in a PMNS with multi-precision coefficients is done with Algorithm \ref{alg:pmns_mult}. The only one difference relies on the polynomial coefficient size. 

In the sequel, we will consider a PMNS $(p,n,\gamma,\rho,E,\phi)$ with $E = X^n-\lambda$. In a PMNS (see Definition \ref{def:pmns}), one has $ \Vert A \Vert _{\infty} < \rho $. Multi-precision coefficient means that one has the parameter $\rho> 2^w$ and consequently that $\phi = 2^{sw}$ with $w$ the machine word size and $s>1$, as mentioned above.

\begin{remark}
We will use a polynomial description, however, a matrix-vector writing, as in \cite{DossoRV22,MeloniPV24} of all these expressions remain equivalent. Since we aim to describe our idea with the most simple concepts, and without loss of generalities, we chose this way of representation.
\end{remark}

Let $A$ and $B$ be two polynomials in the PMNS:
$$A = \sum_{k=0}^{n-1} a_k X^k  \mbox{~and~} B = \sum_{k=0}^{n-1} b_k X^k$$

We set:

$$a_k = \sum_{\mu=0}^{s-1} a_{k}[\mu]2^{w\mu} \mbox{~and~} b_k = \sum_{\mu=0}^{s-1} b_k[\mu]2^{w\mu}$$

where $a_k[\mu]$ and $b_k[\mu]$ are the $\mu^{th}$ word of respectively the multi-precision coefficients $a_k$ and $b_k$.
Thus, considering step 1 of Algorithm \ref{alg:pmns_mult}, we have:

$$C\gets A\times B= \sum_{k=0}^{n-1} c_k X^k$$

with

$$c_k = \sum_{i+j=k} a_i b_j + \lambda\times\sum_{i+j=k+n} a_ib_j$$

and, with a schoolbook type expression

$$a_ib_j = \sum_{\substack{\ell=0\\\mu+\nu=\ell}}^{\ell=2s-2} a_i[\mu]b_j[\nu]2^{\ell w}$$

\section{Polynomial Multiprecision Coefficient using Montgomery Truncation Technique}

Similarly to \cite{DidierEGR24}, we apply a Montgomery truncated reduction technique to the RedCoeff Algorithm \ref{alg:redcoeff}. The main idea in a classic multiprecision setting is to avoid computations in the Montgomery operation.

Let's remind the Montgomery reduction of a large integer $c$ modulo $p$, with $p<c<p^2$, $p$ a large integer:
one set $p'=(-p)^{-1} \bmod N$, with the Montgomery constant such that $N$ is a power of $2^w$ and $N>p$. The Montgomery reduction computes
\begin{algorithmic}[1]
  \State $q\gets cp'\bmod N$
  \State $r\gets  \frac{c+q\times p} N$
\end{algorithmic}

The idea in \cite{DidierEGR24} is to take advantage of the knowledge that the division by $N$ step 2 is exact, and to evaluate only the most significant words of $q\times p$. The resulting algorithm as presented by the authors in \cite{DidierEGR24} is the Algorithm \ref{alg:trunc_montred}. This approach leads to computation speed-ups up to 20 \% in the context of Montgomery modular reduction of classic multi-precision integers.

\begin{algorithm}[ht]
  \caption{Montgomery modular reduction with truncated multiplication \cite{DidierEGR24}
}
  \label{alg:trunc_montred}
  \small
  \begin{algorithmic}[1]
    \Require  $C < 4N^2$, $N$ the $sw$-bit modulus, $R = 2^{sw}$, precomputed value $N' = (-N)^{-1} \bmod R$.
    \Ensure $S \equiv C \times R^{-1}\bmod N$ and $S<2N$
    \State $q\gets C\times N' \bmod R$\Comment{$q<R$}
    \State $\widehat{qN} \gets \lfloor (q\times N)/R \rfloor$
    \State $c_{add} = \mbox{$\bigvee$}_{i=0}^{sw -1} C \langle i \rangle$ \Comment{OR of all bits of $C$}
    \State $S \gets\lfloor C/R  \rfloor+ \widehat{qN} + c_{add}$
    \State \Return $S$
  \end{algorithmic}
\end{algorithm}

Our work now deals with the application of this truncated Montgomery approach to the PMNS internal reduction.

\subsection{Truncated PMNS software implementation}

In the PMNS parameters, we choose $\phi = 2^{sw}$. This means that several machine words are used to store each polynomial coefficients.

Our idea is to combine Algorithm \ref{alg:redcoeff} with Algorithm \ref{alg:trunc_montred} coefficient wise. In Algorithm \ref{alg:trunc_montred}, the truncated approach is applied on a single modular multiplication of large integers. In this paper, it is applied to a polynomial coefficient computation, which is mainly an accumulation of the operand coefficient elementary multiplications. This leads to Algorithm \ref{alg:PMNS_trunc_montred}.

In the Algorithm \ref{alg:PMNS_trunc_montred} step 2, we apply the truncated operation in order to save the elementary multiplications which contribute to the least significant words of the polynomial coefficients. Thus we will replace the conventional multiplication by its truncated counter part, computing only the high part of each coefficient. Step 3, we also reconstruct the final result directly without computing the whole addition, but only the high part and the corresponding carry for each polynomial coefficients.

In detail, the initial operation step 2 of Algorithm \ref{alg:redcoeff} is

$$%\begin{array}{rl}
     T \gets  Q\times M \bmod E = \sum_{k=0}^{n-1} t_k X^k
%\end{array}
$$

with
$$%\begin{array}{rl}
     Q \gets  \sum_{k=0}^{n-1} q_k X^k
%\end{array}
$$
$$%\begin{array}{rl}
     M \gets \sum_{k=0}^{n-1} m_k X^k
%\end{array}
$$

$$t_k = \sum_{i+j=k} q_i m_j + \lambda\times\sum_{i+j=k+n} q_im_j$$

and, with a schoolbook type expression

$$q_im_j = \sum_{\substack{\ell=0\\\mu+\nu=\ell}}^{\ell=2s-2} q_i[\mu]m_j[\nu]2^{\ell w}$$

where $q_i[\mu]$ and $m_j[\nu]$ are respectively the words of weight $\mu$ of $q_i$ and $\nu$ of $m_j$.

In our new truncated approach, we only compute
$$\widehat T = \lfloor(Q\times M)/\phi\rfloor$$
that is 
$$\widehat T = \sum_{k=0}^{n-1} \widehat{t_k} X^k$$
with
$$\widehat{t_k} \gets \lfloor t_k / \phi\rfloor$$
which also is :

$$
\widehat{t_k} = \left\lfloor \frac{\underset{i+j=k}{\sum} q_i m_j + \lambda\times \underset{i+j=k+n}{\sum} q_im_j}{\phi} \right\rfloor
$$

We also set a polynomial of carries for each coefficients. This polynomial contains the carries which would have been propagated in the $C+T$ addition, that is for each coefficient, the carry of the additions coefficient-wise. This carry polynomial is denoted :

$$\mathcal C_{add} = \sum_{k=0}^{n-1} c_{add~k} X^k$$

with

$$c_{add~k} = \bigvee_{i=0}^{sw-1} c_k \langle i \rangle$$

\begin{algorithm}[ht]
  \caption{Montgomery-like PMNS Coefficient Reduction with truncated multiplication}
  \label{alg:PMNS_trunc_montred}
  \small
  \begin{algorithmic}[1]
    \Require  $\mathcal{B} = (p, \, n, \, \gamma, \, \rho, E)$ a PMNS, $C \in \mathbb{Z}_{n-1}[X]$ such that $\Vert C \Vert_{\infty} < 2 \rho$, $\phi \in \mathbb{N}\setminus\{0\}$, $M \isrep 0$
    and $M' \equiv -M^{-1}\ \text{mod}(E, \phi)$.
    \Ensure $S(\gamma) = C(\gamma)\phi^{-1} \bmod p$, with $S \in \mnsB$
    \State $Q\gets C\times M' \bmod (E,\phi)$\Comment{$q_k<\phi$}
    \State $\widehat{T} \gets \lfloor Q\times M /\phi \rfloor$
    \State $\mathcal C_{add} \gets \sum_{k=0}^{n-1} \left(\bigvee_{i=0}^{sw-1} c_k \langle i \rangle \right)X^k$
    \State $S \gets\lfloor C/\phi  \rfloor+ \widehat{T} + \mathcal C_{add}$
    \State \Return $S$
  \end{algorithmic}
\end{algorithm}

\subsection{Correctness of the truncated PMNS Montgomery-like internal reduction}

Theorem~\ref{th:correctness} provides the correctness of our PMNS Montgomery-like coefficient reduction with truncated multiplication. Theorem~\ref{th:qn} describes how to efficiently compute the correct $\widehat{T}$.
 
%\paragraph{Correctness of the truncated MontMul}
\begin{theorem}[Correctness of the Truncated PMNS Coefficient Reduction]
\label{th:correctness}
In the PMNS $\mathcal{B} = (p, \, n, \, \gamma, \, \rho, E)$, we set $C =\sum_{k=0}^{n-1} c_k X^k$ with $c_k < n\rho^2$, precomputed polynomial $M' \equiv (-M)^{-1} \bmod E$ with $M(\gamma) \equiv 0 \bmod p$.

If $\widehat{T}$ is correctly computed, then Algorithm~\ref{alg:PMNS_trunc_montred} correctly computes $S \equiv C \times \phi^{-1}\bmod N$ and $|c_k|<\rho, \forall k, 0\leq k<n$. 
\end{theorem}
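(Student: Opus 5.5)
The plan is to reduce the theorem to a coefficient-wise statement about the exact algorithm RedCoeff (Algorithm~\ref{alg:redcoeff}), whose correctness we take as known, and then to prove that the truncated output coincides with the exact one on every coefficient. Write $T = Q\times M \bmod E = \sum_k t_k X^k$ as in the previous subsection, so that RedCoeff returns $S^{\mathrm{ex}} = (C+T)/\phi$. Two known facts about $S^{\mathrm{ex}}$ will be used. First, since $Q \equiv -C M^{-1} \bmod (E,\phi)$, every coefficient $c_k + t_k$ is divisible by $\phi$. Second, $S^{\mathrm{ex}}(\gamma) \equiv C(\gamma)\phi^{-1} \bmod p$, because $T(\gamma) \equiv Q(\gamma)M(\gamma) \equiv 0 \bmod p$ and $E(\gamma)\equiv 0 \bmod p$. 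It therefore suffices to show $S = S^{\mathrm{ex}}$ coefficient-wise, and then to bound the coefficients. I read the ``$\bmod N$'' in the statement as $\bmod p$, and the bound $|c_k|<\rho$ as referring to the output coefficients $s_k$.

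\textbf{Exact-division step.} Fix $k$ and write $c_k = \lfloor c_k/\phi\rfloor\phi + \ell_c$ and $t_k = \widehat{t_k}\phi + \ell_t$, with $0 \le \ell_c,\ell_t < \phi$. Here $\widehat{t_k} = \lfloor t_k/\phi\rfloor$ is assumed correctly computed, as the hypothesis allows. Then
$$\frac{c_k+t_k}{\phi} = \left\lfloor \frac{c_k}{\phi}\right\rfloor + \widehat{t_k} + \frac{\ell_c+\ell_t}{\phi}.$$
Since the left side is an integer and $0\le \ell_c+\ell_t<2\phi$, the last term is $0$ or $1$, and it equals $0$ exactly when $\ell_c+\ell_t=0$, i.e.\ when $\ell_c=\ell_t=0$. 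Indeed, if $\ell_c=0$ then $\phi \mid t_k$, so $\ell_t=0$; conversely, if $\ell_c\neq 0$ then $\ell_t=\phi-\ell_c\neq 0$ and the sum is exactly $\phi$. Hence the carry is $1$ iff $\ell_c \neq 0$, iff some bit $c_k\langle i\rangle$ with $i<sw$ is set. This is exactly $c_{add~k}$, because $\phi = 2^{sw}$. This argument mirrors the one for Algorithm~\ref{alg:trunc_montred} in \cite{DidierEGR24}.

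\textbf{Signs.} This is where I expect the main obstacle. Negative $c_k$ or $t_k$ must be handled with floor and two's-complement conventions. Step~3 inspects the low $sw$ bits, which equal $c_k \bmod \phi$ under the two's-complement convention; with that convention the argument above goes through unchanged. I will state this convention explicitly. Alternatively, one can assume nonnegative coefficients, matching the hypothesis $c_k < n\rho^2$ as written, together with the fact that $q_k < \phi$.

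\textbf{Bound.} Combining the steps gives $S = S^{\mathrm{ex}}$. For the bound, I estimate $|s_k| \le (|c_k| + |t_k|)/\phi$. Using $|q_i|<\phi$ and $|t_k| < n|\lambda|\,\phi\,\|M\|_\infty$, we get
$$|s_k| < \frac{n\rho^2}{\phi} + n|\lambda|\,\|M\|_\infty.$$
This is below $\rho$ under the standard PMNS parameter conditions from \cite{amns_08,DDV19}, namely $\phi \ge 2n\rho$ and $n|\lambda|\,\|M\|_\infty \le \rho/2$. I will invoke these conditions as the parameter-generation hypotheses already implicit in Algorithm~\ref{alg:redcoeff} ensuring $S\in\mnsB$, rather than re-derive them.
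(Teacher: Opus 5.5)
Your proof is correct, and its core idea is the paper's: exact divisibility $\phi \mid c_k+t_k$ forces the carry out of the low $sw$ bits to be $1$ exactly when the low $sw$ bits of $c_k$ are not all zero. The difference is in how you prove it.

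The paper argues bit by bit: the first set bit of $c_k$ must be matched by a set bit of $t_k$, which generates a carry, and that carry then propagates up to position $sw-1$. You obtain the same fact in one line from $\ell_c+\ell_t\in\{0,\phi\}$, with $\ell_t=\phi-\ell_c$ when $\ell_c\neq 0$. Your version is shorter. With the floor/two's-complement convention you fix, it also visibly covers negative $c_k$ and $t_k$, which the paper's proof never discusses.

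You also make explicit what the paper's proof leaves implicit. The paper only establishes the carry identity. You add that the congruence $S(\gamma)\equiv C(\gamma)\phi^{-1} \bmod p$ and the coefficient bound are inherited from RedCoeff, because $S=S^{\mathrm{ex}}$ coefficient-wise.

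For the bound, inheriting $S^{\mathrm{ex}}\in\mathcal B$ from RedCoeff is enough. If you do re-derive it, use the sharper estimate
\[
|t_k| < \bigl(1+(n-1)|\lambda|\bigr)\,\phi\,\|M\|_\infty .
\]
It holds because the two sums contain $k+1$ and $n-1-k$ terms respectively. The usual parameter conditions in the PMNS literature are phrased with this factor, not with $n|\lambda|$. Your condition $n|\lambda|\,\|M\|_\infty\le\rho/2$ is therefore slightly stronger than what those references actually guarantee.
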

\begin{proof}
In Algorithm~\ref{alg:redcoeff}, we remind that $T\gets Q\times M \bmod E$, and that the division by $\phi$ step 3 is exact. Thus, we know for sure that $C+T \bmod \phi \equiv 0$. This means that the $sw$ least significant bits of each polynomial coefficient of $C+T \bmod E$ are zeroes.
This makes easier the computation of the coefficients of the carry polynomial $\mathcal C_{add}$. If $\widehat{T}$ is correctly computed, then correct result of Algorithm~\ref{alg:PMNS_trunc_montred} depends on the the carry polynomial $\mathcal C_{add}$.    

If the least significant bit of $c_k$ is $0$, then the least significant bit of $t_k$ is also $0$, and no carry is propagated to the next position. This property holds until the first bit of $c_k$ equals~$1$.

Considering the first bit of $c_k$ which is 1, say the $i^{th}$ bit, then the $i^{th}$ bit of $t_k$ must also be 1 and the carry to be propagated to position $i+1$ must be 1 also, to ensure that the next $i^{th}$ bit of $c_k + t_k$ is 0. For now on, to ensure that the remaining most significant bits of $c_k + t_k$ are 0s, only one of the $i+1^{th}$ bits of $c_k$ and $t_k$ must be 1. These conditions are summarized in the table below:

\begin{center}
    \begin{tabular}{p{1.8cm}||c||c||c|c||c||}
   bits    &$0\ldots i-1$&   $i$   &   \multicolumn{2}{c||}{$i+1$}&$i+2\ldots$  \\
    $c_k$ &$0\ldots0$&  1 &   0  &   1   &$x\ldots x$\\
    $t_k$  &$0\ldots0$&   1   &   1  &   0   &$x\ldots x$\\
    \hline
    generated carry $c_{add~k}$   &$0\ldots0$&   1    &   \multicolumn{2}{c||}1&$1\ldots1$    \\
\end{tabular}
\end{center}

Thus, whenever a carry is found at an $i^{th}$ bit, then another carry is generated at position $i+1^{th}$, and so on. Therefore, the last carry at position $sw-1$, to be added to the $sw$ word of $c_k+t_k$ is 1 whenever at least one of the $sw -1$ least significant bits is 1. This carry is output from the $s-1^{th}$ word.
\begin{equation}\label{eq:c_add}
    c_{add~k}= \bigvee_{i=0}^{sw-1} c_k \langle i \rangle
\end{equation}
\end{proof}

%\paragraph{Computation of $\widehat{qN}$ }
\begin{theorem}[Computation of $\widehat{T}$]\label{th:qn}
The correct computation of $\widehat{T}$ requires only the words of weight at least $(s-1)$ of the coefficients of $Q\times M$ .    
\end{theorem}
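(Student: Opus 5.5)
The plan is to expand each coefficient $t_k$ of $Q\times M \bmod E$ word by word, using the schoolbook expression given above:
\[
t_k=\sum_{\ell=0}^{2s-2} \tau_{k,\ell}\,2^{\ell w},\qquad
\tau_{k,\ell}=\sum_{i+j=k}\sum_{\mu+\nu=\ell} q_i[\mu]m_j[\nu]+\lambda\sum_{i+j=k+n}\sum_{\mu+\nu=\ell} q_i[\mu]m_j[\nu].
\]
Write $t_k=L_k+H_k$, where $L_k=\sum_{\ell\le s-2}\tau_{k,\ell}2^{\ell w}$ collects the words of weight below $s-1$. The claim is then that $\widehat{t_k}=\lfloor t_k/\phi\rfloor$ can be recovered from $H_k$ alone, i.e.\ from the partial products of weight $\ell\ge s-1$. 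The point is that $L_k$ affects $\lfloor t_k/2^{sw}\rfloor$ only through a carry into the word of weight $s$, and this carry can be controlled.

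The first step is to bound $L_k$. Each word product satisfies $q_i[\mu]m_j[\nu]<2^{2w}$. For a fixed $\ell$ there are at most $\ell+1\le s-1$ pairs $(\mu,\nu)$, and the sums over $i,j$ contribute at most $n$ terms, weighted by $\max(1,|\lambda|)=2$ for $E=X^n-2$. Summing the geometric weights gives
\[
0\le L_k< 2n(s-1)\,2^{2w}\,2^{(s-2)w}=2n(s-1)\,2^{sw}.
\]
So $L_k/\phi$ is a small quantity, of order $n s$.

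This bound alone does not yet make $L_k$ negligible. The step I expect to be the main obstacle is showing that the lower words are nonetheless not needed, by using the exactness of the Montgomery division already exploited in Theorem~\ref{th:correctness}. We have $c_k+t_k\equiv 0 \bmod \phi$, and the low $sw$ bits of $t_k$ are determined as $(-c_k)\bmod\phi$. Only the carry of $L_k$ into position $sw$ remains unknown. I would handle it as follows:
\begin{itemize}
\item Compute the weight-$(s-1)$ word $\tau_{k,s-1}$ in full, keeping all its high bits, since it straddles the boundary at $\phi$.
\item The known low part of $H_k$ modulo $\phi$, together with the known value $(-c_k)\bmod\phi$, then pins down $L_k$'s contribution above $2^{sw}$. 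Concretely, $\lfloor t_k/\phi\rfloor=\lfloor H_k/\phi\rfloor+\delta_k$, where $\delta_k$ is the unique integer in a range of width below $2n(s-1)$. This $\delta_k$ is fixed by the congruence $H_k+L_k\equiv -c_k \bmod \phi$ together with $0\le L_k<2n(s-1)2^{sw}$.
\end{itemize}
If that range argument gives a unique $\delta_k$ only up to an ambiguity, my first attempt would be to strengthen the bound on $L_k$ to $L_k<2^{sw}$. This requires $2n(s-1)\le 2^{w}$, which holds for all practical parameters since $w=52$ or $64$. With that bound, $\delta_k\in\{0,1\}$ is determined by comparing $H_k\bmod\phi$ with $(-c_k)\bmod\phi$. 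This comparison is exactly the role $\mathcal C_{add}$ plays in Algorithm~\ref{alg:PMNS_trunc_montred}.

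Finally, I would conclude that $\widehat{T}$ is obtained from the words of weight at least $s-1$ of $Q\times M$, plus this carry determined by $C$. Combined with Theorem~\ref{th:correctness}, this shows that the truncated multiplication in step 2 of Algorithm~\ref{alg:PMNS_trunc_montred} suffices.
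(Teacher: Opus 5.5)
The step that fails is the determination of $\delta_k$, and the cause is the granularity of your split. Your $L_k$ contains the whole two-word products $q_i[\mu]m_j[\nu]$ with $\mu+\nu\le s-2$. That includes the high word $\mulwhi(q_i[\mu],m_j[\nu])$ of the products with $\mu+\nu=s-2$. That high word has weight $s-1$, so the statement lets you use it, but you have thrown it away. As a result $L_k$ really can be of order $2n(s-1)\phi$: several terms $q_i[\mu]m_j[\nu]2^{(s-2)w}$ with $\mu+\nu=s-2$ can each be close to $2^{sw}$. The exactness $c_k+t_k\equiv 0 \pmod{\phi}$ determines only $L_k \bmod \phi$ and says nothing about $\lfloor L_k/\phi\rfloor$. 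The data your argument uses ($H_k$, $c_k$, the bound on $L_k$) are equally consistent with $L_k$ and $L_k+\phi$ (both within your bound), which give values of $\lfloor t_k/\phi\rfloor$ differing by $1$. So there is no unique $\delta_k$: up to about $2n(s-1)$ candidates remain. The fallback does not repair this. From $L_k<2n(s-1)2^{sw}$ and $2n(s-1)\le 2^w$ you only get $L_k<2^{(s+1)w}$, and $L_k<2^{sw}$ is simply false for your $L_k$, whatever the size of $w$.

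The missing idea is the one encoded in (\ref{eq:qn}). Split every word product into its $\mulwlo$ and $\mulwhi$ halves. Then keep, in column $s-1$, both the $\mulwlo$ of the products with $\mu+\nu=s-1$ and the $\mulwhi$ of the products with $\mu+\nu=s-2$. The discarded part $L'_k$ then consists of single words of weight $\ell\le s-2$, at most $2n(2\ell+1)$ of them in column $\ell$ (counting the factor $\lambda=2$). Hence $L'_k<2n(2s-3)\,2^{(s-1)w}$. This is below $\phi$ as soon as $2n(2s-3)\le 2^w$, and the carry $\epsilon_k$ of $L'_k$ into word $s-1$ is far below $2^w$. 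Exactness now does recover what is missing, in either of two ways. Through your comparison, it becomes $\delta_k=1$ if and only if $H'_k\bmod\phi>(-c_k)\bmod\phi$, with $H'_k$ the kept part, so $\delta_k\in\{0,1\}$. Alternatively, as the paper does, use the congruence on word $s-1$ of $c_k+t_k$: there $\epsilon_k$ is the only unknown and, being smaller than $2^w$, is read off from the low $w$ bits of $\UP_k$, after which the carry into word $s$ follows. Note also that $\delta_k$ is not what $\mathcal C_{add}$ computes. $c_{add~k}$ only records whether $c_k\bmod\phi\neq 0$, i.e.\ the carry of $C+T$ across position $sw$. Step 4 adds that carry on top of a correctly computed $\widehat T=\lfloor T/\phi\rfloor$.
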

\begin{proof}
    
Since we know that $C+Q\times M  \equiv 0 \bmod \phi$, the computation $\widehat{T}$ can be simplified.

We consider this computation coefficient-wise. It is not necessary to compute all the least significant words of $t_k$ and sum them in order to know which carries to propagate and to compute $\widehat{t_k}$ correctly.

Let us denote $\widetilde{(t_k)}[i]$ the sum of the partial product words of weight $i$. More formally, the partial product of weight $s-1$ is a $\left(w + \lceil \log_2(\lambda n s) \rceil +1 \right)$-bit word and is computed as follows: %%% LSD Vérifier les bornes

\begin{equation*}
    \begin{split}
        \widetilde{(t_k)}[s-1] = &\sum_{i+j=k} \widetilde{(q_i m_j)}[s-1]\\
        &+ \lambda\times\sum_{i+j=k+n} \widetilde{(q_im_j)}[s-1]\\
         &+ \widetilde{(t_k)}[s-2]>>w
    \end{split}
\end{equation*}

with

$$\widetilde{(q_im_j)} = \sum_{\substack{\ell=0\\\mu+\nu=\ell}}^{\ell=2s-2} q_i[\mu]m_j[\nu]2^{\ell w}$$

and

\begin{equation}%{clc}
\label{eq:qn}%{clc}
	\begin{split}
    \widetilde{(q_im_j)}[s-1]  \gets & \! \sum^{s-1}_{\substack{\mu=0,\nu=0\\ \mu+\nu=s-1}} \!\!\!\!\! \mulwlo(q_i[\mu],m_j[\nu])  \\
     &+ \!\!\!\!\!\! \sum^{s-2}_{\substack{\mu=0,\nu=0\\ \mu+\nu=s-2}} \!\!\!\!\! \mulwhi(q_i[\mu],m_j[\nu])
	\end{split}
\end{equation}
where $\mulwhi$ and $\mulwlo$ compute respectively the $w$ higher and lower bits of two $w$-bit operands.
Therefore, the $w-1^{th}$ word of $c_k+t_k$ (which is known to be 0) is:
$$(c_k+t_k)[s-1]=(c_k[s-1]+ \widetilde{(t_k)}[s-1] + c_{add~k} ) \bmod 2^w$$
and then
\begin{equation}\label{eq:TqN}
    (c_k[s-1]+ \widetilde{(t_k)}[s-1] + c_{add~k} ) \equiv 0 \bmod 2^w
\end{equation}

\textbf{If $\mathbf{c_{add~k}=0}$:} This means that all $t_k[\ell]=0$  for $\ell\leq s-1$ and so for $t_k[\ell]$, because
$$q_im_j= \sum_{\substack{\ell=0\\\mu+\nu=\ell}}^{\ell=2s-2} q_i[\mu]m_j[\nu]2^{\ell w}$$
in Alg.~\ref{alg:PMNS_trunc_montred}. This is possible if and only if $q_i[\mu]=0$ for $0\leq \mu < s$. Therefore in equation (\ref{eq:qn}), only the words of weight greater than or equal to $s-1$ are needed.

\textbf{If $\mathbf{c_{add~k}=1}$:} The only way the equation (\ref{eq:TqN}) can be verified is if the binary vector $c_k[s-1]+ \widetilde{(t_k)}[s-1]$ has all its $w$ least significant bits set to 1. 

Let us denote $\UP_k$ the accumulation of the elementary products composing the result $\widetilde{t_k}$, thus including the bits of weight $>w$ of each $s-1^{th}$ word of the coefficient of each $c_k$ above the $w$ first least significant bits.
Using Eq. (\ref{eq:qn}), equation (\ref{eq:TqN}) can be written as follows:
\begin{equation}
     (\UP_k + \widetilde{(t_k)}[s-2] >> w+ c_{add_k} ) \equiv 0 \bmod 2^{w}
\end{equation}
where 
\[%{clc}
%\label{eq:qn}%{clc}
	\begin{split}
    \UP_k  \gets & c_k[s-1] + \!\!\!\sum_{i+j=k} \widetilde{(q_i m_j)}[s-1]\\
        &+ \lambda\times\sum_{i+j=k+n} \widetilde{(q_im_j)}[s-1]\\
	\end{split}
\]

%%%%%%%%%%%%%%%%
\begin{figure}[ht]
    \centering

\scalebox{0.8}{\input{rect12new.pspdftex}}
    \caption{Detail of $\UP_k + (\widetilde{(t_k)}[s-2] >> w)$}
    \label{fig:}
\end{figure}
Therefore, $\widetilde{(t_k)}[s-2]>>w$ can be computed with the least significant bits of $\UP_k$ and its sum with $\UP_k$ does not generate any carry out of the $w^{th}$ bit. Thus $\widetilde{(t_k)}[s-1]>>w$ is computed only with the most significant bits of $\UP_k$ and the carry $c_{add~k}$ which is propagated to the $(w+1)^{th}$ bit of $\UP_k$. Figure \ref{fig:} shows the shape of $UP_k$.
 \medskip
 
 \begin{table*}[t]
    \centering
    \resizebox{\textwidth}{!}{
    \begin{tabular}{|l|c|c||r|r|r||r|r|r||r|r|r|}
    \hline
    & \multirow{2}{1em}{L} & \multirow{2}{1em}{T} & \multicolumn{3}{c||}{\textbf{2048}} & \multicolumn{3}{c||}{\textbf{4096}} & \multicolumn{3}{c|}{\textbf{8192}}\\
    & & & Full & Trunc & \cite{MeloniPV25} & Full & Trunc & \cite{MeloniPV25} & Full & Trunc & \cite{MeloniPV25}\\
    \hline
    \hline
    madd52lo & 4 & 0.5 & 651 & 546 & 672 & 2790 & 2340 & 1680 & 10332 & 8568 & 5424 \\
    \hline
    madd52hi & 4 & 0.5 & 609 & 539 & 576 & 2610 & 2310 & 1296 & 9828 & 8505 & 3888 \\
    \hline
    vpaddq & 1 & 0.5 & 81 & 76 & 18 & 242 & 232 & 447 & 627 & 606 & 1594 \\
    \hline
    vmov & 10 & 0.5 & 471 & 392 & 945 & 2529 & 1944 & 3894 & 9954 & 8387 & 11479 \\
    \hline
    permutex2var & 3 & 1 & 54 & 54 & 104 & 192 & 192 & 124 & 504 & 504 & 1158 \\
    \hline
    valign & 3 & 1 & 24 & 24 & 236 & 84 & 84 & 802 & 192 & 192 & 2472 \\
    \hline
    vextract & 3 & 1 & 48 & 48 & 30 & 132 & 120 & 186 & 504 & 505 & 423 \\
    \hline
    vpbroadcast & 3 & 1 & 126 & 127 & 60 & 270 & 271 & 438 & 504 & 505 & 423 \\
    \hline
    vpsrl/vpsll & 1 & 1 & 93 & 93 & 18 & 266 & 266 & 119 & 659 & 590 & 88 \\
    \hline
    \hline
    \multicolumn{3}{|c||}{Estimated cycles} & 1283 & 1155 & 1586 & 5062 & 4378 & 5360 & 17502 & 15096 & 15510 \\
    \hline
    \end{tabular} }
    \caption{Number of instructions in produced assembly files}
    \label{tab:instructions_in_code}
\end{table*}

As a conclusion $ \widehat{(t_k)}[0]$ is computed only with the partial results of $\widetilde{(q_im_j)}$ of weight at least equal $s-1$: 

\begin{equation}
\label{eq:qnhat}%{clc}
    \begin{split}
        \widehat{(t_k)}[0] = &\sum_{i+j=k} \widetilde{(q_i m_j)}[s]\\
        &+ \lambda\times\sum_{i+j=k+n} \widetilde{(q_im_j)}[s]\\
         &+ \widetilde{(t_k)}[s-1]>>w + c_{add~k}
    \end{split}
\end{equation}

\end{proof}

Thus, to compute these coefficients, likewise in the Algorithm \ref{alg:trunc_montred}, we remark that it is not necessary to compute all the elementary results of the $t_k[\ell]$. Knowing that the division by $\phi$ of $C+T$ is exact, we only compute the most significant part of the coefficients of these polynomials.

\begin{remark}
In the software implementation context, the bits of weight greater or equal to $w$ of $\widetilde{(t_k)}[s-1]$ have to be saved. When using the \texttt{AVX512} instruction extension set and the elementary multiplications of the fused multipliers-adder of 52-bits stored in 64-bit words, this is naturally done. We will detail this in the next sections dealing with the SIMD software implementation.
\end{remark}

\section{Complexity}

\subsection{Theoretical number of multiplications}

In table \ref{tab:count_madd} we sum up the theoretical number of elementary multiplications required to perform a complete multiplication in the PMNS.
Every polynomial multiplication makes use of the quadratic schoolbook algorithm (coefficient-wise and word-wise).
The gain of the truncated approach is then made explicit.
We include the number of multiplications announced in \cite{MeloniPV25} as well.

  \begin{table}
    \centering
    \resizebox{\textwidth}{!}{
      \begin{tabular}{|l||l|l||l|}
        \hline
        & \multicolumn{2}{c||}{\textbf{Schoolbook}} & \textbf{Toeplitz + CIOS\cite{MeloniPV25}}\\
        \hline
        \hline
        $ C \leftarrow A \times B \mod E$  &  \multicolumn{2}{c||}{$n^2 s^2$} & $n^{(2- \log_n (2))} s^2$ \\
        \hline
        $Q \leftarrow C \times M \mod (E, \phi)$ & \multicolumn{2}{c||}{$\frac{1}{2} n^2 s^2$} & $n^{(2- \log_n (2))} s$\\
        \hline
        \multirow{2}{10em}{$T \leftarrow Q \times M \mod E$} & \textit{Full} & $ n^2 s^2$ & \multirow{2}{5em}{$n^{(2- \log_n (2))} s^2$}\\
        & \textit{Trunc} & $n^2 s^2 - \frac{1}{2} n^2 {(s-1)}^2$ &\\
        \hline
        \hline
        \multirow{2}{8em}{Total} & \textit{Full} & $\frac{5}{2} n^2 s^2$ & \multirow{2}{13em}{$2n^{(2- \log_n (2))} s^2 + n^{(2- \log_n (2))}$}\\
        & \textit{Trunc} & $\frac{5}{2} n^2 s^2 - \frac{1}{2} n^2 (s-1)^2 $ & \\
        \hline
      \end{tabular}
    }
    \caption{Number of elementary multiplications in a PMNS multiplication }\label{tab:count_madd}
  \end{table}

\begin{remark}
As we show in table \ref{tab:instructions_in_code}, the number of \texttt{madd52lo} and \texttt{madd52hi} instructions effectively produced in the executable
from \cite{MeloniPV25} is much higher than the theoretical amount from table \ref{tab:count_madd}. We provide the explanation of this in section \ref{sec:res}.
\end{remark}

\subsection{Vectorized implementation}
The main advantage of PMNS relies on its abilities of paralleling computations.
Our code is fully vectorized using \texttt{AVX512} and \texttt{IFMA52} which implies $w=52$.
In these instructions set extensions are available multiplication-accumulations instructions called \texttt{\_mm512\_madd52lo\_epu64} and \texttt{\_mm512\_madd52hi\_epu64} referred to in this paper as \texttt{madd52lo} and \texttt{madd52hi} respectively.
These instructions are at the core of our multiplication algorithm.
Compared to \texttt{mul\_w\_lo} and \texttt{mul\_w\_hi}, they compute multiplication between the two last operands and add the 52
least (resp. most) significant bits of the result to the first one, all of this for a batch of 8 values.
Therefore, it is necessary to call \texttt{madd52lo} as well as \texttt{madd52hi} in order to perform one elementary multiplication over such a batch.

Let $\Bar{n} = \lceil n/8 \rceil$, we realize multiplications on batches, each of 8 values. As consequences, the number of calls to \texttt{madd52lo} and \texttt{madd52hi} combined for a full (resp. truncated) multiplication
in the PMNS becomes $5n \bar{n} s^2$ (resp. $5n \bar{n} s^2 - n \bar{n}(s-1)^2$).

\subsection{Memory Cost}

We compared the memory cost of our choices of parameters with those presented in \cite{MeloniPV25} and summed up results in table \ref{tab:mem_cost}.
We observe that a set of parameters specifically chosen to apply the efficient Toeplitz multiplication leads to an increased memory cost.

\begin{table}[h]
    \centering
    \begin{tabular}{|c|r|r||r|r||r|r|}
    \hline
    & \multicolumn{2}{c||}{\textbf{2048}} & \multicolumn{2}{c||}{\textbf{4096}} & \multicolumn{2}{c|}{\textbf{8192}} \\
    & $n$ & $s$ & $n$ & $s$ & $n$ & $s$\\
    \hline
    \hline
    \textbf{This work} & 7 & 6 & 15 &6 & 21 & 8\\
    \hline
    Size in bits & \multicolumn{2}{c||}{2184} & \multicolumn{2}{c||}{4680} & \multicolumn{2}{c|}{8736} \\
    \hline
    \hline
    \textbf{\cite{MeloniPV25}} & 16 & 3 & 32 & 3 & 64 & 3 \\
    \hline
    Size in bits & \multicolumn{2}{c||}{2496} & \multicolumn{2}{c||}{4992} & \multicolumn{2}{c|}{9984} \\
    \hline
    \hline
    Ratios & \multicolumn{2}{c||}{87.5\%} & \multicolumn{2}{c||}{93.75\%} & \multicolumn{2}{c|}{87.5\%}\\
    \hline
    \end{tabular}
    \caption{Memory cost for each set of parameters}
    \label{tab:mem_cost}
\end{table}

%%%%%%%%%%%%%%%%%%%%%%%%%%%%%%%%%%%%%%%%%%%%%%%%%%
\section{Performances}

\subsection{Experimentation procedure}
\label{sec:perfProc}

We compiled our executable with the following parameters:
"\texttt{-O3 -Wall -Wno-missing-braces -march=native -funroll-all-loops -fno-lto -lgmp -lcrypto}"

For all benchmarks, the \textit{Turbo-Boost}\textregistered~is deactivated.
We proceeded to measurements on two computers: one with an Intel Xeon w7-3555, one with an AMD Ryzen 9 7940HS.
We used the same procedure from \cite{MeloniPV25} and \cite{DidierEGR24} for performance measurement:
\begin {itemize}
	\item Run the code to be tested about 1300 times in order to "heat" the cache memory;
	\item 1200 random data sets are generated, and for each data set
      the minimum of the execution clock cycle numbers over a batch of 1800 runs is recorded;
	\item We take the average of all these minimums as the final measure.
\end{itemize}

All our source code is available at :
%\texttt{\href{https://gitlab.com/truncated_pmns/truncated_pmns}{https://gitlab.com/truncated\_pmns/truncated\_pmns}}
\url{https://gitlab.com/truncated\_pmns/truncated\_pmns}

All PMNS we benchmarked are in fact AMNS such that $E = X^n - 2$.
In order to compare our results with the state of the art \cite{MeloniPV25},
we run its source code which is available at %\texttt{\href{https://github.com/francoispalma/PMNS/tree/master/multiprecision_pmns}{https://github.com/francoispalma/PMNS/}} on our machines as well.
\url{https://github.com/francoispalma/PMNS/tree/master/multiprecision_pmns}

\subsection{Results}
\label{sec:res}

In Table \ref{tab:perfs}, we present our results and compare them to the results presented in \cite{MeloniPV25} which is, to the best of our knowledge, the State of the Art.
For multiple prime sizes, we give the number of cycles required for a complete multiplication, for a truncated one and the ratio between both.
We give the results obtained by running the source code of \cite{MeloniPV25} on our machines as well to compare them with ours.
We always remind the parameters used in our PMNS, $n$ is the maximum number of coefficients of a polynomial in the PMNS and
$s$ is the number of words of $52$ bits used to represent one coefficient of our polynomials.
It is worth noticing for $n=16$ and $s=3$, that is for a 2048 bit prime modulus, the source code used in \cite{MeloniPV25} implements a schoolbook approach for the polynomial multiplications.

\begin{table}[h]
  \centering
  \resizebox{\textwidth}{!}{
  \begin{tabular}{| c || c | c | c || c | c | c |}
    \hline
    CPU & \multicolumn{3}{c||}{\textbf{AMD Ryzen 9}} & \multicolumn{3}{c|}{\textbf{Intel Xeon}}\\
    \hline
    \hline
    Sizes & \textbf{2048} & \textbf{4096} & \textbf{8192} & \textbf{2048} & \textbf{4096} & \textbf{8192}\\
    \hline
    \hline
    \multicolumn{7}{|c|}{\textbf{This Work Schoolbook}}\\
    \hline
    $\mathbf{(n, s)}$ & $\mathbf{(7, 6)}$ & $\mathbf{(15, 6)}$ & $\mathbf{(21, 8)}$ & $\mathbf{(7, 6)}$ & $\mathbf{(15, 6)}$ & $\mathbf{(21, 8)}$\\
    \hline
    Full & 1407 & 6081 & 23671 & 997 & 3552 & 15817\\
    \hline
    Truncated & 1230 & 5268 & 19763 & 947 & 3046 & 13501\\
    \hline
    Trunc/Full & 87.4 \% & 86.3 \% & 83.5 \% & 95 \% & 85.7 \% & 85.4 \%\\
    \hline
    \hline
    \multicolumn{7}{|c|}{\textbf{Toeplitz + Montgomery CIOS\cite{MeloniPV25}}}\\
    \hline
    $\mathbf{n}$ & \textbf{24} & \textbf{48} & \textbf{120} & \textbf{24} & \textbf{48} & \textbf{120}\\
    \hline
    $\mathbf{s = 2}$ & 1920 & 6621 & 31751 & 1538 & 5212 & 24714\\
    \hline
    \hline
    $\mathbf{n}$ & \textbf{16} & \textbf{32} & \textbf{64} & \textbf{16} & \textbf{32} & \textbf{64}\\
    \hline
    $\mathbf{s = 3}$ & 1627 & 6151 & 20885 & 1197 & 5190 & 16365\\
    \hline
    \hline
    $\mathbf{n}$ & \diagbox{} & \textbf{24} & \textbf{48} & \diagbox{} & \textbf{24} & \textbf{48}\\
    \hline
    $\mathbf{s = 4}$ & \diagbox{} & 7029 & 23773 &  \diagbox{} &   5649 &  19094\\
    \hline
    Trunc/Best & 75.6 \% & 85.6 \% & 94.6 \% & 79.1 \% & 58.7 \% & 82.5 \%\\
    \hline
  \end{tabular}}

  \caption{Number of cycles of a multiplication in the PMNS}\label{tab:perfs}
\end{table}

The results in Table \ref{tab:perfs} show that our conventional non truncated schoolbook approach performs better than the Toeplitz PMNS multiplication from \cite{MeloniPV25}, for sizes of 2048 and 4096 bits. This is mainly explained by the more optimal PMNS parameters we use in this work. For these sizes, the Toeplitz approach requires $n$ being a power of 2  in order to build the recursive calls and split the operand. The efficient Toeplitz multiplication algorithm becomes better from 8192 bits, although, on the Xeon platform, it performs worse than our regular schoolbook implementation.
With our new truncated Montgomery-like approach, we observe significant speed-ups allowing this truncated schoolbook to outperform with the Toeplitz matrices multiplication, even in 8192 bits.

In order to explain why the Toeplitz efficient multiplication is worse than our regular schoolbook implementations, we provide some of the most important and costly instruction counts in the assembly file produced by the compiler from the source C codes (the one from \cite{MeloniPV25} and ours, full and truncated).
Table \ref{tab:instructions_in_code} shows the number of call to each instruction as well as an estimated cost in the hypothesis of a perfect pipeline, considering latency (L) and throughput (T) for Sapphire Rapids processors, based on Intel's documentation available at \cite{docIntel}. These counts shows that the instructions other than \texttt{madd52lo} and \texttt{madd52hi}, especially memory moves, are too costly to be neglected at least up to 4096 bits. In the implementations provided by \cite{MeloniPV25}, these costs, combined with the less favorable PMNS parameters, overshadow the benefits of the sub-quadratic Toeplitz multiplication approach.

\section{Conclusion}

In this article, we present a truncated approach of the Montgomery-like internal reduction in PMNS which is  similar to the improvement proposed by \cite{DidierEGR24} in the context of the Montgomery reduction for large multi-precision integers. We compare this new approach with the classical algorithm and we present the complexity of both resulting algorithms in terms of the number of elementary multiplications. This shows an improvement about 20 \% of the asymptotic cost.

We provide a software implementation of the classic full Montgomery-like approach using \texttt{AVX512/IFMA52} instruction set extensions, for modulus sizes 2048, 4096 and 8192 bits. This software implementation is competitive in comparison with the state-of-the-art, taking advantage of better parameter with a smaller memory cost. We then express the number of calls to \texttt{madd52lo} and \texttt{madd52hi} in both codes, which shows the same improvement. The benchmark of these software implementations present equivalent computation timings or speed-ups up to 20 \% compared to the s-o-a, according the modulus sizes.
Then, we experiment our truncated approach implemented with the same features. This new algorithm allows a speed-up up to 15\% compared to the Montgomery-like full algorithm taken as reference.

\bibliographystyle{alpha}
\bibliography{biblio}

\end{document}